\newcommand{\ie}{{\emph{i.e.\/}}}
\newcommand{\C}{\ensuremath{\mathbb{C}}}
\newcommand{\ket}[1]{\ensuremath{|#1\rangle}}
\newcommand{\bra}[1]{\ensuremath{\langle#1|}}
\newcommand{\ketbra}[2]{\ensuremath{\ket{#1} \! \bra{#2}}}
\newcommand{\proj}[1]{\ensuremath{\ketbra{#1}{#1}}}
\newcommand{\braket}[2]{\ensuremath{\langle{#1}|{#2}\rangle}}
\newcommand{\Id}{{\rm 1\hspace{-0.9mm}l}}
\newcommand{\XX}{\mathcal{X}}
\newcommand{\YY}{\mathcal{Y}}
\newcommand{\PP}{\mathcal{P}}
\newcommand{\eqref}[1]{(\ref{#1})}
\newtheorem{proposition}{Proposition}
\newcommand{\text}[1]{\textrm{#1}}
\def\>{\rangle}
\def\<{\langle}
\newcommand{\coloneqq}{\mathrel{\mathop:}=}
\newcommand{\orcid}[1]{\href{https://orcid.org/#1}{\includegraphics[width=12pt]{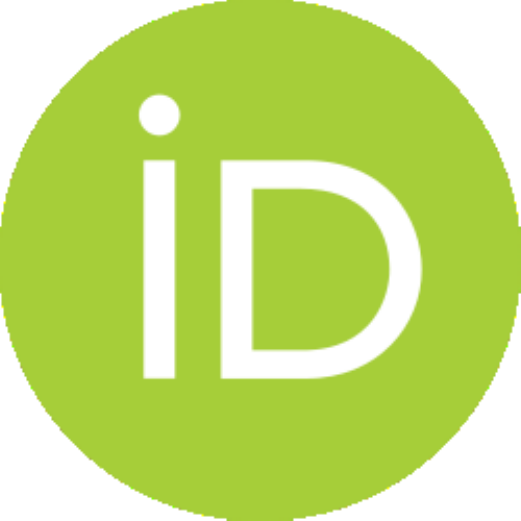}}}
\begin{document}

\title[Benchmarking gate-based quantum devices]{Benchmarking gate-based quantum devices via certification of qubit von Neumann measurements}
\author{Paulina Lewandowska$^{1}$\orcid{0000-0003-1564-7782}, Martin Beseda$^{2}$,*\orcid{0000-0001-5792-2872}}
\address{$^1$ IT4Innovations, VSB~-~Technical University of Ostrava, 17.~listopadu 2172/15, 708 33 Ostrava, Czech Republic}
\address{$^2$ Dipartimento di Ingegneria e Scienze dell'Informazione e Matematica, Universit\`{a} dell'Aquila, via Vetoio, I-67010 Coppito-L'Aquila, Italy}
\address{$^*$ \textit{Corresponding author. E-mail: martin.beseda@univaq.it}}

\begin{abstract}
We present an updated version of PyQBench, an open-source Python library designed for benchmarking gate-based quantum computers, with a focus on certifying qubit von Neumann measurements. This version extends PyQBench's capabilities by incorporating a certification scheme of quantum measurements that evaluates the accuracy on Noisy Intermediate-Scale Quantum devices, alongside its original functionality of von Neumann measurements' discrimination. PyQBench offers a user-friendly command-line interface and Python library integration, allowing users to implement custom measurement schemes and error models for more advanced benchmarking tasks. The new version is specifically designed to support IBM Q devices through integration with the Qiskit library, enabling seamless benchmarking on real quantum hardware. By providing flexible benchmarking tools, PyQBench addresses the critical need for reliable performance metrics in the NISQ era, contributing to the development of error mitigation techniques and the verification of quantum measurement fidelity. The source code is available on GitHub under an open-source license, encouraging community collaboration and further advancements in quantum hardware benchmarking. 
\end{abstract}

\noindent \textbf{Keywords.} 
	Quantum computing,
	Benchmarking,
	Certification, Open-source,
	Python, IBM Q devices, Qiskit

\maketitle
 
\section{Introduction}
Noisy Intermediate-Scale Quantum (NISQ) devices, as outlined by Preskill \cite{preskill2018quantum}, are rapidly developing, built on a variety of architectures and supported by a growing ecosystem of software solutions. Public access to gate-based quantum devices is provided by several hardware vendors, including Rigetti \cite{rigetti}, IBM \cite{ibmq}, IonQ \cite{ionq}, and Xanadu \cite{xanadu}. Each of these vendors typically offers its own software stack and application programming interfaces (APIs) for interacting with their devices. For example, IBM Q devices can be accessed via the Qiskit library \cite{qiskit} or through the IBM Quantum Experience web interface \cite{ibmqplatform}. However, as NISQ devices are still in the early stages of development, they are often characterized by significant noise, errors, and limited qubit counts, which pose challenges for the reliability of quantum computations. Consequently, there is an increasing need for robust benchmarking methodologies to assess their performance.

Benchmarking NISQ devices involves the evaluation of key performance metrics, including quantum gate fidelity, coherence times, error rates, and computational accuracy. These metrics enable researchers and developers to assess how well a quantum device performs under real-world conditions, guiding the optimization of both hardware and algorithms for near-term quantum applications. Given the inherent noise in NISQ devices, benchmarking also focuses on understanding how errors propagate through quantum circuits, the effects of noise on computational outcomes, and the effectiveness of error mitigation techniques \cite{nation2021scalable}.

Several well-established benchmarking techniques are currently employed. One widely used method is randomized benchmarking, which assesses quantum gate error rates by applying random sequences of operations and measuring their success rate \cite{knill2008randomized}. This approach has been implemented in libraries such as Qiskit \cite{qiskit-randomized} and PyQuil \cite{pyquil}. Another significant method is cross-entropy benchmarking, which evaluates the precision of quantum computations by comparing the output of a quantum algorithm with a corresponding classical simulation \cite{cross-entropy-bench}. Additionally, the quantum volume metric has become a key tool for characterizing the overall capacity of a quantum device to solve computational problems. Quantum volume takes into account factors such as qubit count, connectivity, and measurement errors, and can be measured using tools like Qiskit \cite{Cross_2019}.

While these methods provide valuable insights into the performance of quantum devices, they focus primarily on gate-level errors or device-specific characteristics. Recently, an alternative benchmarking approach was introduced in \cite{jalowiecki2023pyqbench}, focusing on the certification of qubit von Neumann measurements. This approach offers a more direct evaluation of a device's ability to accurately discriminate between quantum measurements. The authors developed PyQBench, an open-source Python library for benchmarking gate-based quantum computers, which assesses the ability of NISQ devices to perform qubit von Neumann measurement discrimination \cite{puchala2018strategies}. PyQBench supports any device available through the Qiskit library, allowing users to benchmark hardware from providers such as IBM Q \cite{ibmq} and Amazon Braket \cite{amazon}. The library offers a command-line interface (CLI) for running predefined benchmarking scenarios based on the parametrized Fourier family of qubit von Neumann measurements.

In this paper, we build upon the PyQBench framework by incorporating a certication scheme for qubit von Neumann measurements, as described in \cite{lewandowska2021certification}. We provide a simplified CLI for executing benchmarks  certifying the predefined parametrized Fourier family of measurements. Furthermore, for more advanced use cases, PyQBench can be employed as a Python library, allowing users to define custom measurement schemes or integrate error models for a more comprehensive evaluation. The new certification scheme is specifically designed for IBM Q devices and is integrated with the Qiskit library. Additionally, all the functionalities of PyQBench for the discrimination scheme of von Neumann measurements have also been updated to the new version of Qiskit. 

The source code for PyQBench is publicly available on GitHub\footnote{\url{https://github.com/iitis/PyQBench}} under an open-source license, enabling users to adapt and extend the package for their specific research needs. 
By providing this tool, we aim to facilitate the benchmarking of NISQ devices across a range of quantum computing platforms and contribute to the development of reliable quality metrics for near-term quantum technologies.

This paper is organized as follows. Section \ref{sec:preli} presents the necessary mathematical preliminaries. Section \ref{sec:cert} introduces the binary certification scheme for qubit von Neumann measurements, and Section \ref{sec:reali} discusses its realization on NISQ devices. In Section \ref{sec:cert-fourier}, we present the certification scheme for the parametrized Fourier family of qubit measurements. Details regarding the software functionalities are provided in Section \ref{sec:soft}, where Section \ref{sec:lib} focuses on PyQBench's Python library interface, and Section \ref{sec:cli} details the CLI functionalities. The certification results obtained on IBM Q platform are evaluated and discussed in Section \ref{sec:results} both with and without noise mitigation applied. Finally, Section \ref{sec:conclusion} summarizes the main contributions and results of this work. Technical details and examples of YAML configuration files used in our experiments are included in the appendices.

\section{Mathematical preliminaries}\label{sec:preli}
 Consider two complex Euclidean spaces and denote them by $\XX, \YY$. Let $\mathrm{L}(\XX, \YY)$  be the set of all linear operators of the form $M: \XX \rightarrow \YY$. %For simplifying the notation, square operators will be denoted by $\mathrm{L}(\XX)$.  
 From now on, let $L(X) \coloneqq L(X , X )$, as a shorthand.
 The set of quantum states, that is, positive semidefinite operators that trace equal to one, will be denoted by $\Omega(\XX)$.
 An operator $U \in \mathrm{L}\left(\XX\right) $ is unitary if it satisfies the equation $U U^\dagger = U^\dagger U = \Id$. 
 The notation $\mathrm{U}\left(\XX\right)$ will be used to denote the set of all unitary operators.
 We will distinguish the special class of unitary matrices that is diagonal. This subset will be denoted by $\mathrm{DU}(\XX)$. 
 
 A general quantum measurement, that is, a positive operator-valued measure (POVM) $\mathcal{P}$ is a finite collection of positive semidefinite operators $\{ E_0, \ldots, E_m \}$ called effects, which sum up to identity, \ie $\sum_{i=0}^m E_i = \Id$. If all the effects are rank-one projection operators, then such a measurement is called a von Neumann measurement. Every von Neumann measurement can be parametrized by a unitary matrix.  If $U$  is a unitary matrix, one can construct a von Neumann measurement $\PP_{U}$ by taking projectors onto its columns. In this case, we say that $\PP_{U}$ is described by the matrix $U$. 
 
 Typically, NISQ devices offer measurements on the computational basis, that is, $U = \Id$. To implement an arbitrary von Neumann measurement $\PP_{U}$, one has to first apply $U^\dagger$ to the system and then measure on the computational basis. It can be simply illustrated in Fig.\ref{vonneumann}.

\begin{figure}[hpt!]
	\centering
	\includegraphics[scale=1.7]{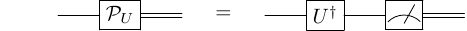}
	\caption{Schematic representation of implementation of a von Neumann measurement using measurement in computational basis. The left circuit shows a symbolic representation of a von Neumann measurement $\PP_{U}$. The right circuit represents its decomposition by changing the basis $U^\dagger$ followed by measurement in the computational basis. } \label{vonneumann}
\end{figure} 
 
Our results often use the terms numerical range and q-numerical range \cite{nr}. The numerical range is a subset of the complex plane defined for a
matrix $X \in \mathrm{L}(\XX)$ as \begin{equation}
W(X) \coloneqq \{ \bra{\phi} X \ket{\phi}: \braket{\phi}{\phi} =1 \}, 
\end{equation}
 while the q-numerical range \cite{tsing1984constrained, li1998q, li1998some}  is defined as 
 \begin{equation}
 W_q(X) \coloneqq \{ \bra{\phi} X \ket{\psi}: \braket{\phi}{\phi} = \braket{\psi}{\psi} = 1, \braket{\phi}{\psi} = q, q\in \C \}. 
 \end{equation}
 We will also use the notation 
 \begin{equation}
 \nu_q(X) \coloneqq \min \{ |x|: x \in W_q(X) \},
 \end{equation}
 as the distance between $q $-numerical range to zero.

 \section{Certification scheme}\label{sec:cert}
Let us consider the following scenario. Imagine we have a black box that contains one of two von Neumann measurements, either $\PP_{\Id}$, a single-qubit measurement in the computational Z-basis, or an alternative measurement $\PP_{U}$, which is performed in the basis $U$. The owner of the box informs us that $\PP_{\Id}$ is the measurement inside, but we do not fully trust this promise. To address this uncertainty, we decided to perform a hypothesis testing scheme. We therefore treat $\PP_\Id$ as the null hypothesis $H_0$ and take $\PP_{U}$ as the alternative hypothesis $H_1$.

Typically, a certification scheme requires an auxiliary qubit \cite{lewandowska2021certification}. Thus, we begin by preparing an input state $\ket{\psi_0}$, which could be entangled. Next, one of the two measurements, $\PP_{\Id}$ or $\PP_{U}$, is performed on the first qubit. After measurement is performed, the null hypothesis $H_0$ corresponds to the state $(\PP_{\Id} \otimes \Id)(\proj{\psi_0})$, while the alternative hypothesis $H_1$ corresponds to the state $(\PP_{U} \otimes \Id)(\proj{\psi_0})$. Based on the outcome $i$ obtained from the first qubit, we perform a final von Neumann measurement $\PP_{V_i}$ on the second one. The outcome $j$ of the measurement $\PP_{V_i}$ determines whether we accept or reject the null hypothesis $H_0$. We assume that if $j=0$, we accept $H_0$; otherwise, if $j=1$, we reject it.

Similarly to classical hypothesis testing, the certification scheme involves two types of errors.  The type I error occurs if we reject the null hypothesis when it was in reality true. The type II error occurs if we accept the null hypothesis, when we should have rejected it. The objective of the certification task is to find an optimal input state and final measurement that minimize one type of error while keeping the other fixed. In this work, our aim is to minimize the type II error, given a fixed type I error. We will assume a statistical significance $\delta \in [0,1]$, that is, the probability of type I error will be upper-bounded by $p_{\text{II}} \le \delta$. A schematic of the certification setup is illustrated in Fig.\ref{fig-cert-real}.

   \begin{figure}[hpt!]
   	\centering
   	\includegraphics[scale=1.7]{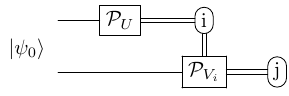}
	   	\caption{Schematic representation of the setup for von Neumann measurement certification scheme with null hypothesis $H_0: \PP_{\Id} \otimes \Id$  and alternative hypothesis $H_1: \PP_{U} \otimes \Id$ with statistical significance $\delta$.  First, the joint system is prepared in some state $\ket{\psi_0}$. Then,
   		the measurement $\PP_{U}$ is performed on the first qubit. Based on the outcome $i$, we perform the final measurement $\PP_{V_i}$ on the second qubit, obtaining the output in $j$. Finally, based on the label $j$ we make a decision. If $j=0$, then we accept the hypothesis $H_0$, otherwise we reject the hypothesis $H_0$. } \label{fig-cert-real}
   \end{figure} 
    
Summarizing, the problem of certification of von Neumann measurements has the following hypotheses
\begin{eqnarray}
H_0&: \PP_\Id \otimes \Id \\
H_1&:  \PP_U \otimes \Id.
\end{eqnarray}
Let us assume the statistical significance $\delta \in [0,1]$.
Using the celebrated result in 
\cite{lewandowska2021certification}, 
one finds that the minimized probability of type II error yields
	\begin{equation}\label{cert-max}
p_{\text{II}} = \max_{E \in \text{DU}(\XX)} \nu^2_{\sqrt{1-\delta}} \left(UE\right).
\end{equation}
But how do we construct the appropriated initial state and final measurement, which minimize the probability of type II error? 
To construct the optimal initial state for certification task between $\PP_{U}$ and $\PP_{\Id}$, one starts by calculating their distance in the notion of the diamond norm \cite{watrous2018theory} given by
\begin{equation}
\| \PP_{U} - \PP_{\Id} \|_\diamond = \max_{\| \ket{\psi}\|_1 = 1 } \| \left((\PP_{U} - \PP_{\Id})  \otimes \Id \right) (\proj{\psi})\|_1.  
\end{equation}
A quantum state $\ket{\psi_0}$, which maximizes the diamond norm, is called a discriminator. This state is precisely an optimal initial state in the certification scheme \cite{lewandowska2021certification}. It is worth noting that this state is also optimal for the problem of discrimination between quantum von Neumann measurements $\PP_{U}$ and $\PP_{\Id}$ \cite{watrous2018theory}. However, determining the discriminator is a problematic task that is only sometimes possible to solve. Moreover, just for some relatively small number of qubits, the diamond norm and the form of discriminator can be computed, e.g., using semidefinite programming (SDP)\cite{watrous2009semidefinite}. Having the form of the discriminator, determining the final measurements $\PP_{V_0}$ and $\PP_{V_1}$ is straightforward using the results from \cite{lewandowska2021certification}.

\section{Realization of certification scheme on actual NISQ devices }\label{sec:reali}

In PyQBench, benchmarks involve experimentally determining the minimized probability of type II error with a fixed statistical significance $\delta$ between two von Neumann measurements conducted by the device under test and then comparing these results to theoretical predictions. Naturally, to achieve a reliable estimation of the underlying probability distribution, we need to repeat the procedure outlined in Fig.~\ref{fig-cert-real} multiple times. 

However,  we are not able to implement the certification scheme one-to-one. Current NISQ devices cannot perform conditional measurements, presenting a significant obstacle to implementing our scheme on real hardware. We overcome this challenge by modifying our scheme to utilize only the components available on current devices. We achieve this through two equivalent approaches: postselection or direct sum.

\subsection{Postselection}

The first idea uses a postselection scheme in Fig.~\ref{fig:postsellection-cert}. Here, instead of preparing the measurement $\PP_{V_i}$ conditioned on the outcome $i$, we run circuits with both $\PP_{V_0}$ and one with $\PP_{V_1}$ and measure both qubits. We then discard the results of the circuit for which the label measured on the first qubit, $i$, does not match the unitary matrix label $k$ as schematically shown in Fig.~\ref{fig:postsellection-cert}. 	
\begin{figure}[htp!]
	\centering 
	\includegraphics[scale=1.7]{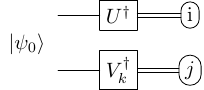} 
	\caption{ A schematic representation of the setup for certificating von Neumann measurements using postselection.
	}\label{fig:postsellection-cert}
\end{figure}  

Hence, our experiments can be grouped into classes identified by tuples of the form $(k, i, j)$, where $k,i,j \in \{0,1\}$, for which $k \ne i$. Hence, the total number of valid experiments is
\begin{equation}
N_\text{total} = \#\{(k, i, j): k = i \}. 
\end{equation}
Finally, we count the valid experiments resulting in successful certification. If we define
\begin{eqnarray}
N_{0} = \#\{(k, i, j): k = i, j = 0\},
\end{eqnarray}
then the empirical probability of type II error yields
\begin{equation}
p_{\text{II}} = \frac{N_{0}}{N_{\text{total}}}. 
\end{equation}

\subsection{Direct sum}
The second idea uses the direct sum $V_0^\dagger \oplus V_1^\dagger  
$ operator. This approach can be described as in Fig.~\ref{fig:controlled-cert}.      Here,
instead of performing a conditional measurement $\PP_{V_i}$, we implement the operator 
\begin{equation}\label{direct}
V_0^\dagger \oplus V_1^\dagger = |0\rangle \langle 0| \otimes V_0^\dagger + |1\rangle \langle 1| \otimes V_1^\dagger.
\end{equation} 
Now, depending on the outcome $i$, one of the summands in Eq.\eqref{direct} vanishes, and we end up performing exactly the same operations as in the original scheme.
\begin{figure}[htp!]
	\centering 
	\includegraphics[scale=1.7]{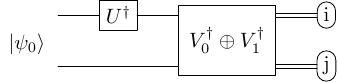} 
	\caption{ A schematic representation of the setup for certificating
		measurements using direct sum   $V_0^\dagger \oplus V_1^\dagger  
		$. }\label{fig:controlled-cert}
\end{figure} 
In this approach, the experiment can be characterized by the tuple $(i,j)$. If we define
\begin{eqnarray}
N_{0} = \#\{(i, j): j = 0\}, 
\end{eqnarray}
then  	the probability of type II error yields 
\begin{equation}
p_{\text{II}} = \frac{N_{0} }{N_{\text{total}}},
\end{equation}
where $N_{\text{total}}$ is the number of trials.

\section{Certification scheme for parametrized Fourier family of measurements} \label{sec:cert-fourier}

In this section, we provide a certification scheme by assuming
that all needed components are known. To do this,  we introduce a parametrized Fourier family of measurements defined as a set $\{ \PP_{U_\phi}: \phi \in [0,2\pi] \}$, where 
\begin{equation} \label{fourier-cert-bench}
U_\phi = H 
\left(\begin{array}{cc}1&0\\0&e^{i \phi}\end{array}\right)  H^\dagger, 
\end{equation} 
and  $H$ is the Hadamard matrix of dimension two. In addition, assume the statistical significance $\delta \in [0,1]$. 
 For each element of this
set, the optimal input state $\ket{\psi_0}$ is the Bell state\begin{equation}
\ket{\psi_0} = \frac{1}{\sqrt{2}} (\ket{00} + \ket{11}),
\end{equation}
whereas the unitaries $V_0$ and $V_1$ have the following form 
		\begin{enumerate}
		\item if $\sqrt{1+\cos\phi} \ge \sqrt{2\delta} $ and $\phi \in [0, \pi)$, then  \begin{equation}
		V_0 = \left(\begin{array}{cc}\sqrt{1-\delta}&\sqrt{\delta}\\-\sqrt{\delta}&\sqrt{1-\delta}\end{array}\right),
		\end{equation} 
		and 
		\begin{equation}
		V_1 = \left(\begin{array}{cc}\sqrt{\delta}&\sqrt{1-\delta}\\\sqrt{1-\delta}&-\sqrt{\delta}\end{array}\right);
		\end{equation} 
		
		\item if $ \sqrt{1+\cos\phi} < \sqrt{2\delta} $ and $\phi \in [0, 2\pi)$, then  \begin{equation}
		V_0 = \left(\begin{array}{cc}\sin \frac{\phi}{2}& | \cos \frac{\phi}{2}|\\-\cos \frac{\phi}{2}&\frac{\sin \phi}{2 |\cos \frac{\phi}{2}| }\end{array}\right),
		\end{equation}
		and 
		\begin{equation}
		V_1 = \left(\begin{array}{cc} | \cos \frac{\phi}{2} | &  \sin \frac{\phi}{2} \\ \frac{\sin\phi}{2|\cos \frac{\phi}{2}|} & - \cos \frac{\phi}{2} \end{array}\right); 
		\end{equation}
		
		\item $\sqrt{1+\cos\phi} \ge \sqrt{2\delta}$ and $\phi \in [\pi, 2\pi)$, then  \begin{equation}
		V_0 = \left(\begin{array}{cc}\sqrt{1-\delta}&-\sqrt{\delta}\\\sqrt{\delta}&\sqrt{1-\delta}\end{array}\right),
		\end{equation}
		and 
		\begin{equation}
		V_1= \left(\begin{array}{cc} - \sqrt{\delta}&\sqrt{1-\delta}\\\sqrt{1-\delta}&\sqrt{\delta}\end{array}\right). 
		\end{equation}
	\end{enumerate} 
Finally,  the minimized  probability of type II error is given by 
\begin{equation}
p_{\text{II}}=\left\{
\begin{array}{ccc}
\left( \frac{ |1+e^{i  \phi} | }{2} \cdot \sqrt{1-\delta} - \sqrt{1-\frac{|1+e^{i  \phi}|^2}{4}}  \cdot \sqrt{\delta } \right)^2 &\mbox{for}&\frac{ |1+e^{i  \phi} | }{2} > \sqrt{\delta},\\
0&\mbox{for}&\frac{ |1+e^{i  \phi} | }{2} \le \sqrt{\delta}.
\end{array}
\right.
\end{equation}
 We explore the construction of the certification strategy for the parametrized Fourier family of measurements in \ref{app:fourier}.

 \section{Software functionalities} \label{sec:soft}
 The new version of PyQBench is dedicated to IBM Q devices available through the Qiskit library. As already described, PyQBench can be used as a Python library or a command line interface (CLI).
Both functionalities are implemented as part of \texttt{qbench} Python package. The exposed CLI tool is also named \texttt{qbench}.  
PyQBench can be installed from the official Python Package Index (PyPI) by running \texttt{pip install pyqbench}. To properly configure the Python environment, simply use YML file (environment.yml). The installation process should also provide the user with the \texttt{qbench} command without the need for further configuration.  For more information on software functions, we can read \cite{jalowiecki2023pyqbench}. 
%We provide the PyQBench source code on GitHub \cite{}  under an open source license that will allow users to utilize and extend our package in their specific applications. \todo[inline]{link to repo}

This section is divided into two parts. In Section~\ref{sec:lib}, we describe PyQBench as Python library. Next, in Section~\ref{sec:cli}, we present general functionality of PyQBench's CLI.

\subsection{PyQBench as Python library} \label{sec:lib}
  In this subsection, we will demonstrate how \texttt{qbench} package can be used with user-defined measurement. When used as a library, PyQBench allows the manipulation of certification scheme by, e.g. adding a noise model. The user then defines a unitary matrix $U$ that describes the measurement to be certified, the optimal initial state $\ket{\psi_0} $ and unitaries $V_0$ and $V_1 $ to create the final measurement. The PyQBench library provides then the following functionalities: assembling circuits for certification scheme, running the whole circuits defining benchmark on specified IBM Q backend and, finally, interpreting the obtained outputs by computing the minimized probability of type II error.
%  Below we present the example of usage PyQBench as Python library which can be found in \cite{}. \todo[inline]{link to repository - Martin ?}

 Inspired by Example 1 in \cite{lewandowska2021certification},   we will present the certification of the measurement performed in the Hadamard basis to show the usage of PyQBench as a Python library. 
The explicit formula for initial state in this case reads
\begin{equation}
\ket{\psi_0} = \frac{1}{\sqrt{2}} (\ket{00} + \ket{11}), 
\end{equation}
  with the final measurements of the forms
\begin{equation}
 		V_0 = \left(\begin{array}{cc}\sqrt{1-\delta}&\sqrt{\delta}\\-\sqrt{\delta}&\sqrt{1-\delta}\end{array}\right),
 		\end{equation} 
and 
\begin{equation}
 		V_1 = \left(\begin{array}{cc}-\sqrt{\delta}&\sqrt{1-\delta}\\\sqrt{1-\delta}&\sqrt{\delta}\end{array}\right),
 		\end{equation} 
      for some choice of the statistical significance $\delta \in [0,1]$. Thus, the minimized probability of type II error is equal to
\begin{equation}
    p_{\text{II}} = \frac{1}{2} (\sqrt{1-\delta } - \sqrt{\delta})^2.
\end{equation}
In our example, let assume $\delta = 0.05$. 
To use the above benchmarking scheme in PyQBench, we first need to construct circuits that can be executed by actual hardware.  The
circuit taking $\ket{00}$ to the Bell state $\ket{\psi_0}$ comprises the Hadamard gate followed by CX gate on both qubits. For $V_0^\dagger$ and $V_1^\dagger$ observe that
$V_0^\dagger = \text{RY}\left( 2 \arcsin\left(\sqrt{0.05}\right) \right)$ and $V_1^\dagger = \text{X} \cdot \text{RY}\left( 2\arcsin\left(\sqrt{0.05}\right) \right)$ where $\text{RY}(\theta)$ is rotation gate around the Y axis defined by \begin{equation}
    \text{RY}\left(\theta \right) = \left(\begin{array}{cc} \cos\left(\frac{\theta}{2} \right)& -\sin\left(\frac{\theta}{2} \right) \\ \sin\left(\frac{\theta}{2} \right)& \cos\left(\frac{\theta}{2} \right) \end{array}\right). 
\end{equation}

We will now demonstrate how to implement this theoretical scheme in PyQBench with some modifications.
For this example we will use the Qiskit Aer simulator \cite{aerstate}. First, we import
the necessary functions and classes from PyQBench and Qiskit.

\begin{lstlisting}[language=Python, caption=Importing the necessary functions and classes]
from qiskit_aer import StatevectorSimulator
from qiskit import QuantumCircuit
import numpy as np
from qbench.schemes.postselection import 
    benchmark_certification_using_postselection
from qbench.schemes.direct_sum import 
    benchmark_certification_using_direct_sum
\end{lstlisting}
To implement the certification scheme in PyQBench, we need to define all
the necessary components as Qiskit instructions. To do so, we first define \texttt{QuantumCircuit(2)} acting on qubit 0 and 1 and then use the \texttt{instruction()} method.

\begin{lstlisting}[language=Python, caption=Defining components for Hadamard experiment]
def state_prep():
    circuit = QuantumCircuit(2)
    circuit.h(0)
    circuit.cx(0,1)
    return circuit.to_instruction()
    
def u_dag():
    circuit = QuantumCircuit(1)
    circuit.h(0)
    return circuit.to_instruction()

def v0_dag():
    circuit = QuantumCircuit(1)
    circuit.ry(2 * np.arcsin(np.sqrt(0.05)), 0)
    return circuit.to_instruction()

def v1_dag():
    circuit = QuantumCircuit(1)
    circuit.ry(2 * np.arcsin(np.sqrt(0.05)), 0)
    circuit.x(0)
    return circuit.to_instruction()

def v0_v1_direct_sum_dag():
    circuit = QuantumCircuit(2)
    circuit.p(-np.pi, 0)
    circuit.ry(-2 * np.arcsin(np.sqrt(0.05)), 0)
    circuit.cx(0, 1)
    return circuit.to_instruction()
\end{lstlisting}
We also need to construct a backend object, which is an instance Aer simulator.
\begin{lstlisting}[language=Python, caption=Defining a backend]
simulator = StatevectorSimulator()
\end{lstlisting}
 \parbox{\textwidth}{Now, when one wishes to run the experiment without any modifications on a given backend, it is enough to run \texttt{benchmark\_certification\_using\_postselection} or \texttt{benchmark\_certification\_using\_direct\_sum} function, depending on the user preference.}

\begin{lstlisting}[language=Python, caption=Simulation benchmark by using postselection]
postselection_result = 
    benchmark_certification_using_postselection(
    backend=simulator,
    target=0,
    ancilla=1,
    state_preparation=state_prep(),
    u_dag=u_dag(),
    v0_dag=v0_dag(),
    v1_dag=v1_dag(),
    num_shots_per_measurement=10000)
\end{lstlisting}

\begin{lstlisting}[language=Python, caption=Simulation benchmark by using direct sum]
direct_sum_result = benchmark_certification_using_direct_sum(
    backend=simulator,
    target=0,
    ancilla=1,
    state_preparation=state_prep(),
    u_dag=u_dag(),
    v0_v1_direct_sum_dag=v0_v1_direct_sum_dag(),
    num_shots_per_measurement=10000)
\end{lstlisting}
The \texttt{postselection\_result} and \texttt{direct\_sum\_result} variables contain now
the empirical probabilities of type II error. We can compare them
with the theoretical value and compute the absolute error.

\begin{lstlisting}[language=Python, caption=Examining the benchmark results]
p_succ = (1/np.sqrt(2) * np.sqrt(0.95) +
        - 1/np.sqrt(2) * np.sqrt(0.05))**2
print(f"Analytical p_succ = {p_succ}")
print(f"Postselection: p_succ = {postselection_result}, 
abs_error ={np.abs(p_succ - postselection_result)}")
print(f"Direct_sum: p_succ = {direct_sum_result}, 
abs_error ={np.abs(p_succ - direct_sum_result)}")
\end{lstlisting}

\begin{lstlisting}[language=Python]
Analytical p_succ = 0.2820550528229661
Postselection: p_succ = 0.28322830780328, 
    abs_error = 0.001173254980313898
Direct_sum: p_succ = 0.28333, abs_error = 0.0012749471770339138
\end{lstlisting}

 But what if we want to modify this process?  PyQBench provides functions performing: assembly of circuits needed for experiment under providing the components and interpretation of the obtained measurements.

For the rest of this example, we focus only on the postselection case, as the
direct sum case is analogous. We continue by importing two more functions
from PyQBench. 

    \begin{lstlisting}[language=Python, caption=Assembling circuits]
from qbench.schemes.postselection import (
    assemble_circuits_certification_postselection,
    compute_probabilities_certification_postselection)

circuits = assemble_circuits_certification_postselection(
    target=0,
    ancilla=1,
    state_preparation=state_prep(),
    u_dag=u_dag(),
    v0_dag=v0_dag(),
    v1_dag=v1_dag())
\end{lstlisting}
 Recall that for the postselection scheme we have two possible choices of a final measurement. The function \texttt{assemble\_circuits\_certification\_postselection} creates two circuits and places them in a dictionary with keys "\texttt{u\_v0}", "\texttt{u\_v1}". Now we will run our circuits using noisy and noiseless simulation. We start by creating a noise model using Qiskit. 

    \begin{lstlisting}[language=Python, caption=Adding noise model]
from qiskit_aer.noise import NoiseModel, ReadoutError

error = ReadoutError([[0.75, 0.25], [0.8, 0.2]])

noise_model = NoiseModel()

noise_model.add_readout_error(error, [0])
noise_model.add_readout_error(error, [1])
\end{lstlisting}
 Now we can execute the circuits with and
without noise. To do this, we will use Qiskit's \texttt{run} function. It should be mentioned that we have to keep track of which measurements correspond to which circuit. We do so by fixing an ordering on the keys in the circuits
dictionary.

    \begin{lstlisting}[language=Python, caption=Running circuits]
keys_ordering = ["u_v0", "u_v1"]

all_circuits = [circuits[key] for key in keys_ordering]

counts_noisy = simulator.run(
    all_circuits,
    backend=simulator,
    noise_model=noise_model,
    shots=100000).result().get_counts()

counts_noiseless = simulator.run(
    all_circuits,
    backend=simulator,
    shots=100000).result().get_counts()
\end{lstlisting}
  Finally, we use the measurement counts to compute discrimination probabilities using \texttt{compute\_probabilities\_certification\_postselection} function.

    \begin{lstlisting}[language=Python, caption=Computing probabilities]
prob_succ_noiseless = 
compute_probabilities_certification_postselection(
    u_v0_counts=counts_noiseless[0],
    u_v1_counts=counts_noiseless[1],)

prob_succ_noisy = 
compute_probabilities_certification_postselection(
    u_v0_counts=counts_noisy[0],
    u_v1_counts=counts_noisy[1],)
\end{lstlisting}
We can now examine the results. From the experiment, we obtain \texttt{prob\_succ\_noiseless} = 0.28072503092454 and \texttt{prob\_succ\_noisy}
= 0.77564942534484. As expected, for noisy simulations, the result lies
further away from the target value of 0.2820550528229661.
This concludes our example. 
  \subsection{PyQBench as \texttt{qbench} CLI} \label{sec:cli}
   PyQBench contains a simplified CLI  for running certification experiments for the parametrized Fourier family of measurements defined previously in Section \ref{sec:cert-fourier}. The CLI configuration is performed by the YAML \cite{yaml} files. One file describes the experiment to be performed,  whereas the second one contains the description of the backend on which the benchmark should be run. 
   In addition, a particular experiment can run in a synchronous or asynchronous mode. The mode of execution is defined in the YAML file describing the backend (see the example below). What does it mean? When we choose to run the experiment in asynchronous mode,   PyQBench submits all tasks performing a benchmark and then writes an intermediate YAML file containing metadata of submitted experiments. The intermediate file can be used to query the status of the submitted jobs and, finally, to resolve them to get the measurement outcomes. In synchronous mode, PyQBench submits all jobs required to run the benchmark and then waits for their completion.     The CLI of PyQBench has a nested structure which is carefully described in \cite{jalowiecki2023pyqbench} for benchmarking based on the discrimination scheme of von Neumann measurements.
    
  Recall that the general form of the CLI invocation is shown in Listing \ref{lst:cli}. 
    \begin{lstlisting}[caption=Invocation of \texttt{qbench} script, label=lst:cli]
  qbench <benchmark-type> <command> <parameters>
  \end{lstlisting}
  So far, PyQBench's CLI has supported only one type of benchmark (based on the discrimination scheme of the parametrized Fourier family of qubit von Neumann measurements). Now, we extend the CLI to the certification scheme.  
  Thus, the accepted values of \texttt{<benchmark-type>} are \texttt{disc-fourier} and \texttt{cert-fourier}. Both have four subcommands: 
  \begin{enumerate}
      \item \texttt{benchmark} -- Running the benchmark. Depending on the experiment scenario, execution can be synchronous, or asynchronous;
       \item \texttt{status} -- Checking the status of the submitted jobs if the execution is asynchronous;
       \item \texttt{resolve} -- Resolving asynchronous jobs into the actual measurement outcomes; 
       \item \texttt{tabulate} -- Converting obtained measurement outcomes into tabulated form in CSV file. 
  \end{enumerate}

\subsubsection{Preparing configuration files}
Now we present how to prepare the configuration files. 
 The first configuration file
describes the experiment scenario to be executed.
\begin{lstlisting}[language=Python, caption=Defining the experiment, label=lst:experiment]
type: certification-fourier
qubits:
	- target: 0
	  ancilla: 1
angles:
	start: 0
	stop: 2 * pi
	num_steps: 8
delta: 0.05
gateset: ibmq
method: direct_sum
num_shots: 10000
\end{lstlisting}

The experiment file contains the following fields:
\begin{itemize}
	\item \texttt{type}: a string describing the type of experiment. Currently,  we command two typos: \texttt{discrimination-fourier} and \texttt{certification-fourier}.
	\item \texttt{qubits}: a list enumerating pairs of qubits on which the experiment should be run. We describe a particular pair of qubits using \texttt{target} and \texttt{ancilla} keys  to emphasize that the role of qubits in the experiment is distinguished.  For the configuration in Listing \ref{lst:experiment}, the benchmark will run on one pair of qubits. In our case, the \texttt{target} is given for qubit 0, and the \texttt{ancilla} is given for qubit~1. 
	\item \texttt{angles}: an object describing the range of angles for the parameterized Fourier family of measurements. The range is always uniform, starts at \texttt{start}, ends at \texttt{stop} and contains \texttt{num\_steps} points, including both \texttt{start} and \texttt{stop}. The \texttt{start} and \texttt{stop} can be arithmetic expressions using \texttt{pi} literal. For example, the range defined in Listing \ref{lst:experiment} contains eight points: $\frac{k\pi}{4}$, where $k=0,\ldots,8$.
    \item  \texttt{delta}:  a float defining the statistical significance chosen from the interval $[0,1]$.  
 	 \item \texttt{gateset}: a string describing the set of gates used in the decomposition of circuits in the experiment. The current version of PyQBench contains explicit implementations of circuits using native gates of \texttt{ibmq} corresponding to decompositions compatible with IBM Q devices. Alternatively, one might wish to turn off the decomposition using a special value \texttt{generic}. 
     \item \texttt{method}: a string, either \texttt{postselection} or \texttt{direct\_sum} determining
which implementation of the conditional measurement is used.
\item \texttt{num\_shots}: an integer defining number of shots are performed in the experiment for a particular angle, qubit pair, and method. 
\end{itemize}

 The second configuration file describes the backend.  Below we describe an example YAML file describing IBM Q backend named Kyiv. Nevertheless, the syntax for each IBM Q backend will be similar.  Note that IBM Q backends typically require an access token to IBM Quantum Experience. The token is configured in \texttt{QISKIT\_IBM\_TOKEN} environmental variable.
\begin{lstlisting}[language=Python, caption=Defining IBMQ backend, label=lst:backend]
name: ibm_kyiv
asynchronous: false
provider:
	hub: ibm-q
	group: open
	project: main
\end{lstlisting}
In this configuration, we have chosen the synchronous mode. Hence, we will first describe considerations related to this mode. We will then briefly describe the asynchronous mode and the differences between these two approaches.

\subsubsection{Running the experiment with stdout output file}
After preparing the YAML files, the benchmark can be launched by using the following command line invocation:
\begin{lstlisting}[language=Python, caption=Running the experiment]
qbench cert-fourier benchmark experiment.yml backend.yml
\end{lstlisting}
The output file will be printed as stdout. Optionally, the \texttt{- -output OUTPUT} parameter might be provided to write the output to the \texttt{OUTPUT} file instead.
\begin{lstlisting}[language=Python, caption=Running the experiment with YML output file]
qbench cert-fourier benchmark experiment.yml backend.yml 
-output async_results.yml
\end{lstlisting}

\subsubsection{Running experiment in the synchronous mode}
If the backend is synchronous, the output will contain the metadata describing the experiment and backend configuration, as well as the histograms with measurement outcomes (bitstrings) for each of the circuits run. The part of output looks similar to the one below. The whole YAML file can be seen in Appendix \ref{app:yml}. 
\begin{lstlisting}[language=Python, caption=YML output file of synchronous experiment]
metadata:
  experiments:
    type: certification-fourier
    qubits:
    - {target: 0, ancilla: 1}
    angles: {start: 0.0, stop: 6.283185307179586, num_steps: 8}
    delta: 0.05
    gateset: ibmq
    method: direct_sum
    num_shots: 10000
  backend_description:
    name: ibm_kyiv
    asynchronous: false
    provider: {group: open, hub: ibm-q, project: main}
data:
- target: 0
  ancilla: 1
  phi: 0.0
  delta: 0.05
  results_per_circuit:
  - name: u
    histogram: {'00': 4787, '01': 4663, '11': 314, '10': 236}
    mitigation_info:
      target: {prob_meas0_prep1: 0.00539999999999996, 
            prob_meas1_prep0: 0.0018}
      ancilla: {prob_meas0_prep1: 0.0048000000000000265, 
            prob_meas1_prep0: 0.0018}
    mitigated_histogram: {'10': 0.02272105079666005, 
        '11': 0.03083240199174341, '01': 0.46866397190027914,
        '00': 0.47778257531131735}\end{lstlisting}
The data whereas includes \texttt{target, ancilla, phi, delta} and \texttt{results\_per\_circuit} information. The
first four pieces of information have already been described. The last data \texttt{results\_per\_circuit} gives us the following additional information:
\begin{itemize}
    \item \texttt{histogram}: the dictionary with measurements' outcomes. The keys represent possible bitstrings, whereas the values are the number of occurrences.
    \item  \texttt{mitigation\_info}: the parameters 
    \texttt{prob\_meas0\_prep1} and \texttt{prob\_meas1\_prep0}
 contains information that are used for error mitigation using the MThree method \cite{mthree} and can be found using \texttt{backends.properties().qubits}. If this information is available, it will be stored in the mitigation info field; otherwise, this field will be absent.
        \item  \texttt{mitigation\_histogram}: the histogram with measurements' outcomes after the error mitigation. 

\end{itemize}

\subsubsection{Running experiment in the asynchronous mode}
For the asynchronous case, we use the same experiment file and we slightly change the backend file  by changing the flag  \texttt{asynchronous:true}. If the backend is asynchronous, the output will contain intermediate data containing, amongst others, \texttt{job\_ids} correlated with the circuit they correspond to. Below, we present YML output file of the asynchronous experiment.
\begin{lstlisting}[language=Python, caption=YML output file of asynchronous experiment]
metadata:
  experiments:
    type: certification-fourier
    qubits:
    - {target: 0, ancilla: 1}
    angles: {start: 0.0, stop: 6.283185307179586, num_steps: 8}
    delta: 0.05
    gateset: ibmq
    method: direct_sum
    num_shots: 10000
  backend_description:
    name: ibm_kyiv
    asynchronous: true
    provider: {group: open, hub: ibm-q, project: main}
data:
- job_id: cy60ptkcw2k0008jwsag
  keys:
  - [0, 1, u, 0.0, 0.05]
  - [0, 1, u, 0.8975979010256552, 0.05]
  - [0, 1, u, 1.7951958020513104, 0.05]
  - [0, 1, u, 2.6927937030769655, 0.05]
  - [0, 1, u, 3.5903916041026207, 0.05]
  - [0, 1, u, 4.487989505128276, 0.05]
  - [0, 1, u, 5.385587406153931, 0.05]
  - [0, 1, u, 6.283185307179586, 0.05]
\end{lstlisting}

\subsubsection{Getting status  of asynchronous jobs}
PyQBench provides also an optional helper command to check the status of asynchronous jobs. The output is a dictionary with histogram of statuses. 
\begin{lstlisting}[language=Python, caption=Getting status of jobs]
qbench cert-fourier status async_results.yml
\end{lstlisting}

\subsubsection{Resolving asynchronous jobs}
If status will be \texttt{DONE}, the stored intermediate data can be resolved in measurements' outcomes.  The resolved results, stored in \texttt{resolved.yml}, would look similar to if the
experiment was run synchronously.  The  YAML output  file can be seen in Appendix \ref{app:yml}.   The following command obtains the resulting file. 
\begin{lstlisting}[language=Python, caption=Resolving status of jobs]
qbench cert-fourier resolve async_results.yml resolved.yml
\end{lstlisting}

\subsubsection{Computing probabilities and tabulating}
No matter in which mode the benchmark was run, the both cases the final output file is suitable for being an input for the command computing the certification probabilities.
As a last step in the processing workflow, the results can be obtained using the command:
\begin{lstlisting}[language=Python, caption=Tabulating results]
qbench cert-fourier tabulate resolved.yml results.csv 
\end{lstlisting}
The results for different certification configurations are discussed in-depth in Section \ref{sec:results}.

% The results is given as CSV file providing below: 
% \begin{table}[ht!]
% 	\begin{tabular}{|c c c c c c c|} 
% 		\hline
% 		target & ancilla & phi & delta &  ideal$\_$prob  & cert$\_$prob & mit$\_$cert$\_$prob \\ [0.5ex] 
% 		\hline\hline
% 		0 & 1 & 0 & 0.05 & 0.95  & 0.948 & 0.949 \\ 
% 		\hline
% 		0 & 1 & 0.898 & 0.05 & 0.61 & 0.597 & 0.597  \\
% 		\hline
% 		0 & 1 & 1.795 & 0.05 & 0.187 & 0.189 & 0.186 \\
% 		\hline
% 		0 & 1 & 2.696 & 0.05  & 0 & 0.0185 & 0.0138 \\
% 		\hline
% 		0 & 1 & 3.59 & 0.05 & 0 & 0.021 & 0.0163 \\ 
% 		\hline
% 		0 & 1 & 4.488 & 0.05 & 0.187 & 0.192  & 0.189 \\ 
%         \hline
%         0 & 1 & 5.386 & 0.05 & 0.61& 0.611& 0.61 \\ 
%         \hline
% 		0 & 1 & 6.283 & 0.05 & 0.95& 0.946&0.949\\
% 		\hline
% 	\end{tabular}
% \caption{The resulting CSV file containing the table with results of the experiments.   The tuple  \texttt{(target, ancilla)} describes qubits used during experiment.
% The angle \texttt{phi} determines the Fourier measurement which would like to be certified whereas the parameter \texttt{delta} refers to the statistical significance chosen in the experiment.
% The reference theoretical minimized probability of type II error is presented in \texttt{ideal\_prob} column, whereas the obtained,
% 	empirical probability of type II error is given in \texttt{cert\_prob} column. The
% 	\texttt{mit\_cert\_prob} column contains empirical probability of type II error after applying the
% 	\texttt{Mthree} error mitigation \cite{mthree, mthreepublication}.}
%     \label{fig:tabulateresults}
% \end{table}

\section{Results}\label{sec:results}
In this section, we present the benchmarking results obtained using PyQBench, evaluating the performance of gate-based quantum computers with respect to the certification of qubit von Neumann measurements. The results include experimental outcomes, statistical comparisons, and an analysis of the impact of error mitigation techniques. In addition, we provide a discussion on potential sources of error, limitations of the benchmarking framework, and implications for future research.

The certification scheme was tested on IBM Q devices using the Qiskit library, with experiments conducted for different measurements in Fourier basis. The primary metric used in this benchmarking is the probability of type II error, which measures the likelihood of failing to reject the null hypothesis when the alternative hypothesis is true. Theoretical and experimental values were compared and error mitigation techniques were applied to assess their impact on the results.

To ensure statistical significance, each experimental configuration was repeated multiple times to mitigate the effects of quantum hardware fluctuations and stochastic noise sources. The obtained results were analyzed by computing confidence intervals, which provide an estimate of the experimental uncertainty.

The results given in Table \ref{tab:benchmark_results}, where theoretical and experimental comparisons are based on three key probabilities. The \textit{ideal probability} represents the expected theoretical probability of type II error, assuming perfect quantum measurement fidelity. The \textit{certification probability}  is the observed experimental probability derived from real quantum hardware, while the \textit{mitigated certification probability} incorporates the Mthree error mitigation technique\,\cite{mthree, mthreepublication}  to reduce readout errors and enhance result reliability. Confidence intervals show that the experimental deviations from theoretical predictions remain within statistically acceptable bounds.

\begin{table}[h]
\centering
\begin{tabular}{|c|c|c|c|c|c|c|}
\hline
Target & Ancilla & $\phi$ & $\delta$ & Ideal Prob & Cert Prob & Mitigated Cert Prob \\
\hline
0 & 1 & 0 & 0.05 & 0.95 & 0.948 $\pm$ 0.002 & 0.949 $\pm$ 0.002 \\
0 & 1 & 0.898 & 0.05 & 0.61 & 0.597 $\pm$ 0.003 & 0.597 $\pm$ 0.003 \\
0 & 1 & 1.795 & 0.05 & 0.187 & 0.189 $\pm$ 0.002 & 0.186 $\pm$ 0.002 \\
0 & 1 & 2.696 & 0.05 & 0.0 & 0.0185 $\pm$ 0.001 & 0.0138 $\pm$ 0.001 \\
0 & 1 & 3.59 & 0.05 & 0.0 & 0.021 $\pm$ 0.001 & 0.0163 $\pm$ 0.001 \\
0 & 1 & 4.488 & 0.05 & 0.187 & 0.192 $\pm$ 0.002 & 0.189 $\pm$ 0.002 \\
0 & 1 & 5.386 & 0.05 & 0.61 & 0.611 $\pm$ 0.003 & 0.610 $\pm$ 0.003 \\
0 & 1 & 6.283 & 0.05 & 0.95 & 0.946 $\pm$ 0.002 & 0.949 $\pm$ 0.002 \\
\hline
\end{tabular}
\caption{Benchmarking results for PyQBench experiments, including theoretical probabilities of type II error and experimental values with standard deviation estimates. The tuple  (\textit{Target}, \textit{Ancilla}) describes qubits used during experiment.
The angle $\phi$ determines the Fourier measurement which would like to be certified whereas the parameter $\delta$ refers to the statistical significance chosen in the experiment.
The reference theoretical minimized probability of type II error is presented in \textit{Ideal Prob} column, whereas the obtained,
	empirical probability of type II error is given in \textit{Cert Prob} column. The
	\textit{Mitigated Cert Prob} column contains empirical probability of type II error after applying the
	\texttt{Mthree} error mitigation.}
\label{tab:benchmark_results}
\end{table}

The empirical results closely align with the theoretical predictions, confirming the accuracy of PyQBench in benchmarking the certification of quantum measurements. The results demonstrate that the probability of type II error follows the expected trend across different values of the parameterized Fourier family of measurements. The small deviations observed between theoretical and experimental values arise due to a combination of quantum noise, decoherence effects, and readout errors inherent in the IBM Q hardware.

% \subsubsection{Theoretical vs. Experimental Comparisons}
% \begin{itemize}
%     \item The \textbf{ideal probability} represents the expected theoretical probability of type II error under the given parameters, assuming perfect quantum measurement fidelity.
%     \item The \textbf{certification probability} ($\text{cert\_prob}$) denotes the observed experimental probability obtained from real quantum hardware.
%     \item The \textbf{mitigated certification probability} ($\text{mit\_cert\_prob}$) incorporates the Mthree error mitigation technique to reduce readout errors and provide more reliable results.
%     \item The confidence intervals indicate that the experimental deviations from theoretical predictions are within acceptable statistical bounds.
% \end{itemize}

% \subsubsection{Impact of Error Mitigation}
Error mitigation techniques were applied to improve the accuracy of the experimental results. The Mthree mitigation technique significantly reduced the deviations between the experimental and theoretical results, particularly for measurements where noise effects were more pronounced. The most substantial improvements were observed for small probabilities of type II error, demonstrating that error mitigation can enhance the reliability of certification schemes in quantum benchmarking. The effectiveness of error mitigation highlights the need for integrating noise-aware benchmarking methodologies in practical quantum computing applications.

% \subsection{Statistical Significance and Performance Evaluation}

The statistical significance $\delta$ was fixed at 0.05 for all experiments, ensuring that the probability of type I error remained within acceptable bounds. The experimental results indicate that, for certain values of $\phi$, the empirical probability of type II error deviates slightly from the theoretical predictions. However, the observed discrepancies remain within an acceptable range, validating the robustness of the PyQBench benchmarking framework.

% To quantify these deviations, we calculated the mean absolute error given by

% \begin{equation}
% MAE = \frac{1}{N} \sum_{i=1}^{N} | p_{II}^{\text{exp}}(\phi_i) - p_{II}^{\text{theo}}(\phi_i) |
% \end{equation}

% between the theoretical and experimental values, where $p_{II}^{\text{exp}}(\phi_i)$ and $p_{II}^{\text{theo}}(\phi_i)$ represent the empirical and theoretical probabilities of type II error for a given measurement angle $\phi_i$. The calculated MAE value across all experimental configurations was found to be below 0.01, indicating strong agreement between experiment and theory.

% \subsection{Summary of Results}

% In summary, the benchmarking experiments conducted using PyQBench demonstrate the effectiveness of the certification scheme for qubit von Neumann measurements. The key findings are as follows:
% \begin{itemize}
%     \item The empirical probability of type II error closely matches the theoretical values, confirming the accuracy of the methodology.
%     \item Error mitigation techniques, particularly Mthree mitigation, enhance the reliability of the results by reducing deviations caused by noise.
%     \item The calculated mean absolute error remains below 0.01, reinforcing the robustness of the benchmarking framework.
%     \item The PyQBench framework provides a scalable and statistically rigorous approach for benchmarking quantum measurement accuracy on IBM Q devices.
% \end{itemize}

The results are also visualized through three plots that examine the ideal, certified, and mitigated certified probabilities as functions of the parameter $\phi$. These visualizations provide a comprehensive understanding of the certification process, as well as the improvements introduced by the mitigation technique applied on top of it.

\begin{figure}[h]
    \centering
    \includegraphics[width=.8\textwidth]{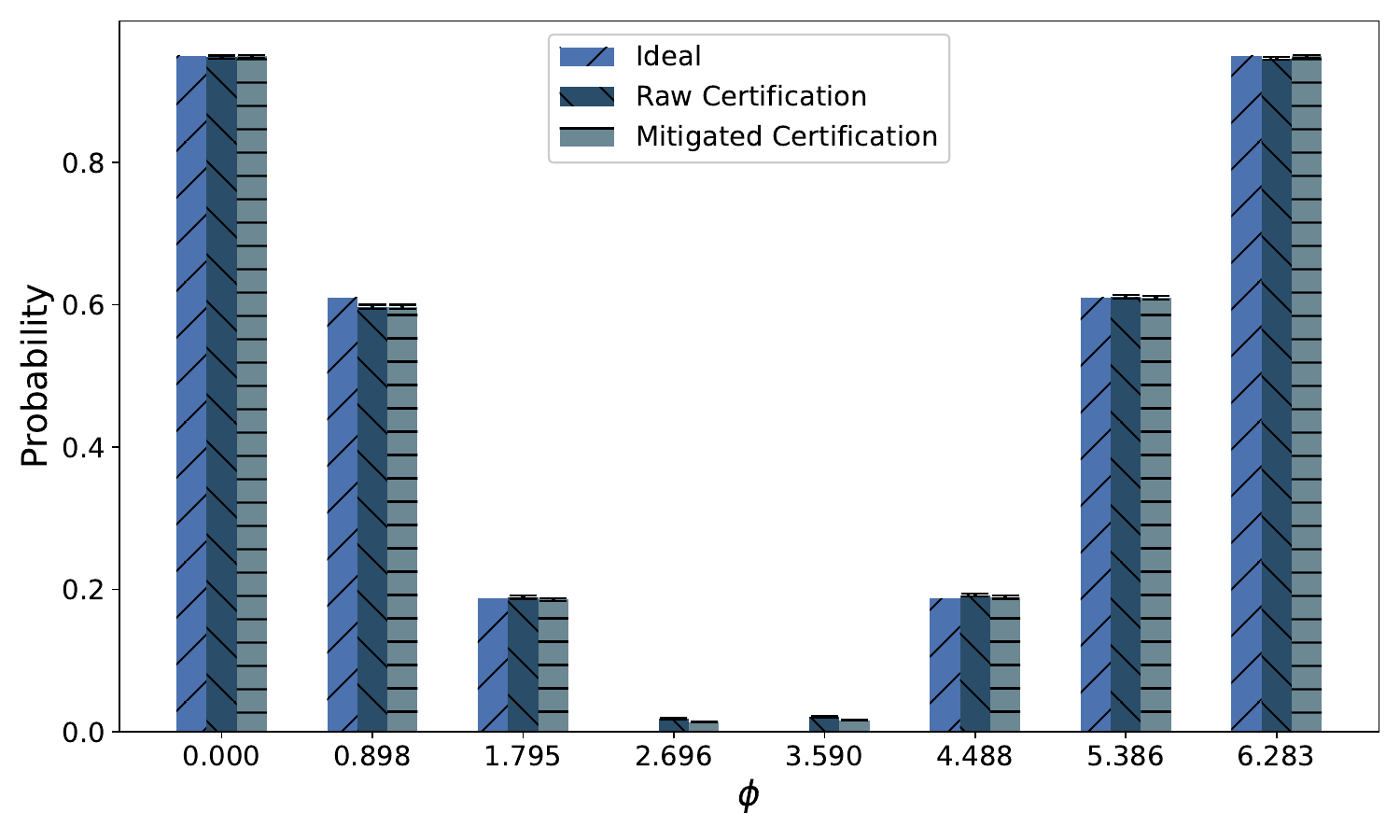}
    \caption{Comparison of probabilities obtained via raw certification and certification with mitigation w.r.t. the theoretical predictions.}
    \label{fig:certProbs}
\end{figure}

 Fig. \ref{fig:certProbs} compares the ideal probability, the certified probability, and the mitigated certified probability for values of $\phi$ ranging from 0 to 6.283. Both the certification and mitigation processes closely track the ideal probability, though the mitigation provides slight improvements over the certified results. For instance, at $\phi = 0$, the ideal probability is 0.95, with the certified probability at 0.948 and the mitigated certified probability at 0.949, each with a margin of error of $\pm 0.002$. However, more noticeable discrepancies arise near $\phi = 2.696$, where the ideal probability is zero. Here, the certified probability reaches 0.0185, whereas the mitigated process reduces this to 0.0138, demonstrating the effectiveness of mitigation in correcting errors introduced by the certification process.

\begin{figure}[h]
    \centering
    \includegraphics[width=.8\textwidth]{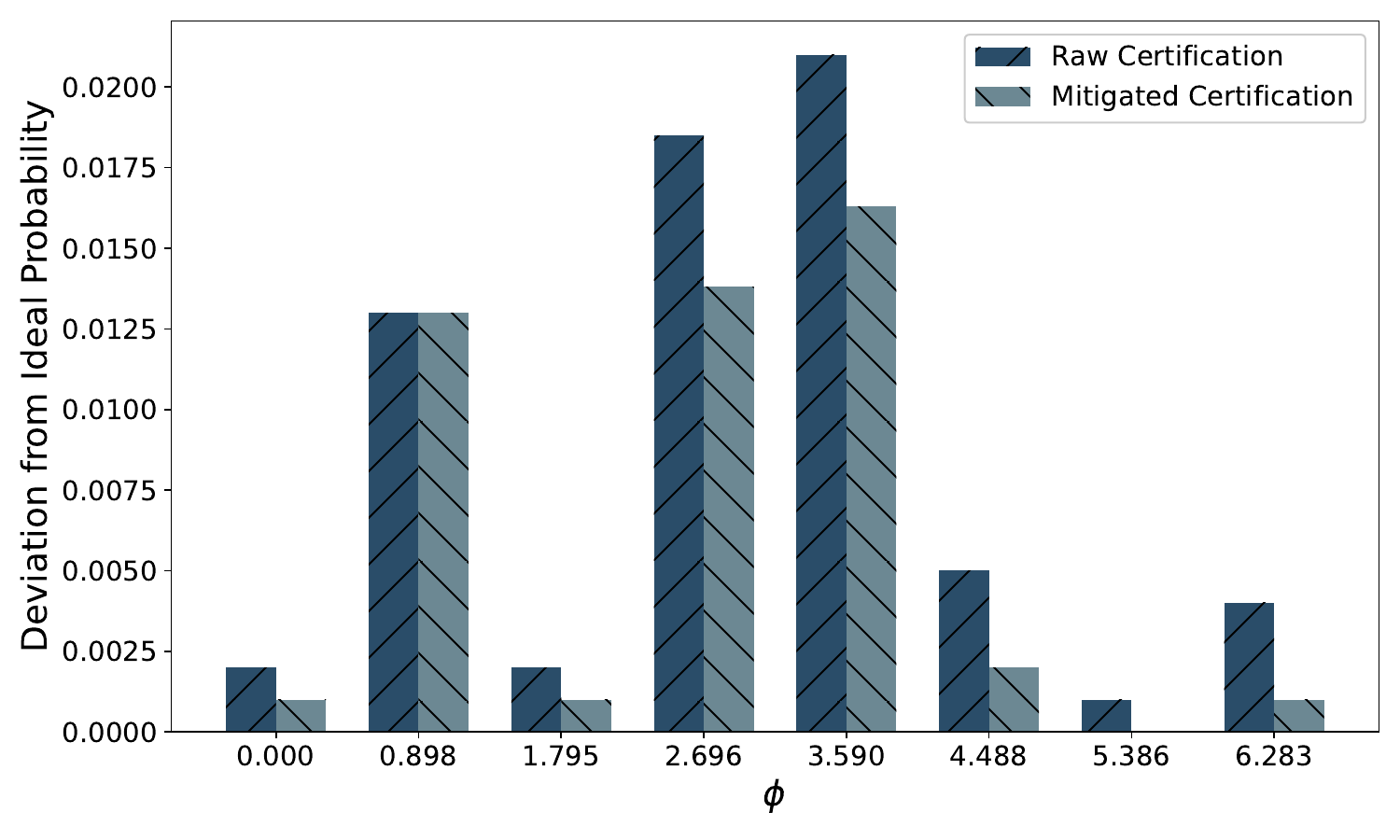}
    \caption{Absolute deviations of the certification both with and without the mitigation from the theoretical probability predictions.}
    \label{fig:certDevs}
\end{figure}

The Fig. \ref{fig:certDevs} illustrates the absolute deviation of both the certified and mitigated certified probabilities from the ideal probability. This plot clearly shows that mitigation reduces the deviation across all values of $\phi$, especially where the certified process shows significant discrepancies. For instance, at $\phi = 2.696$, the certified probability deviates from the ideal by 0.0185, while the mitigated process reduces this to 0.0138. A similar improvement is seen at $\phi = 1.795$, where the certified probability deviates by 0.002 and the mitigated process reduces this deviation to 0.001. Overall, the results indicate that while the certification process provides a reasonable approximation of the ideal, the mitigation technique consistently improves accuracy by reducing deviations, particularly when the certified probability alone deviates significantly.

\begin{figure}[h]
    \centering
    \includegraphics[width=.8\textwidth]{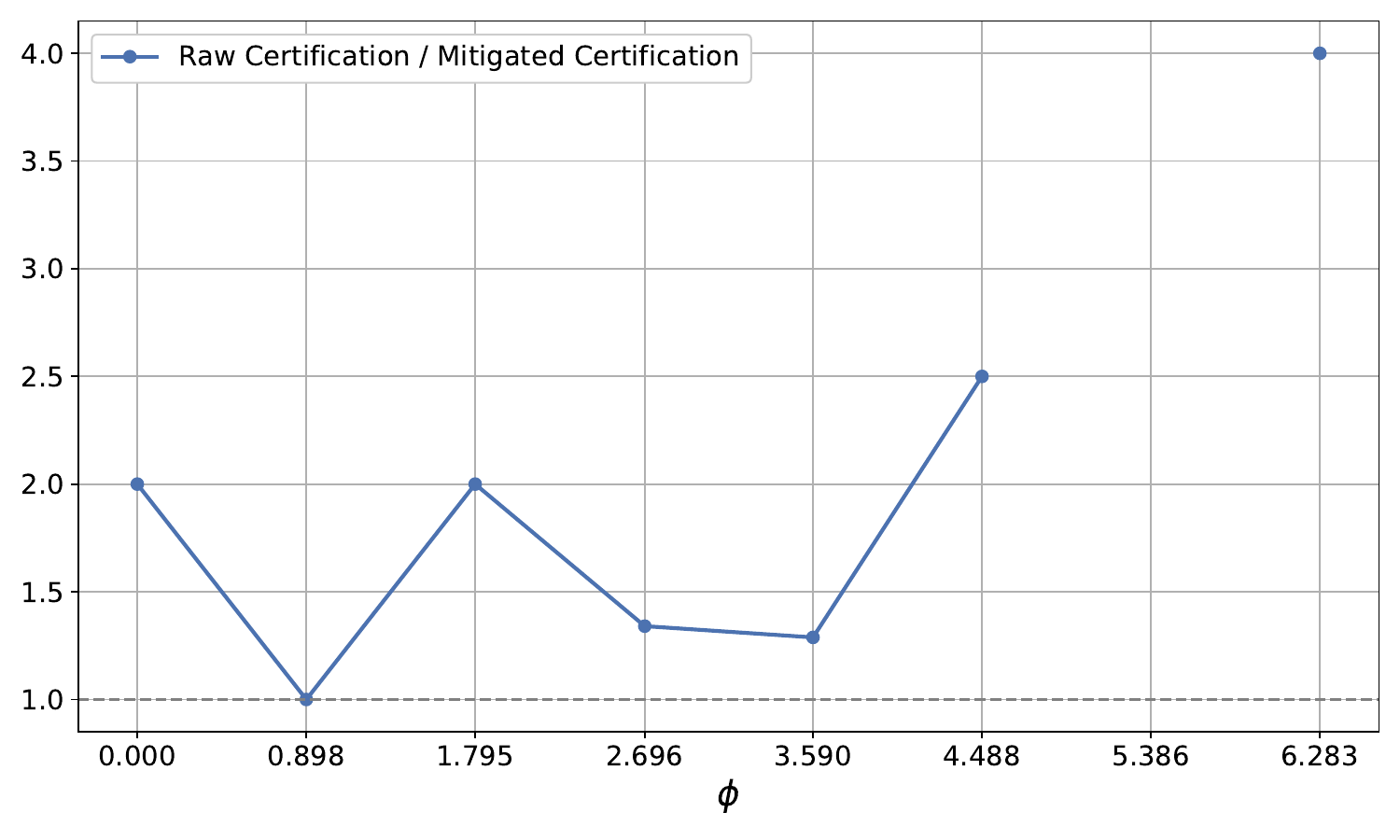}
    \caption{The ratio of deviations of certification both with and without the noise mitigation.}
    \label{fig:rationCertProbs}
\end{figure}

The Fig. \ref{fig:rationCertProbs} explores the ratio between the certified probability and the mitigated certified probability for each value of $\phi$, which provides a clearer view of the impact of mitigation on the certification process. In this context, a ratio of one indicates that both processes yield the same result, as can also be seen at $\phi = 0.898$, while a ratio greater than one demonstrates that mitigation provides an improvement. Across the rest values of $\phi$, the ratio is above one, signifying that mitigation enhances the certified results. The most notable improvement occurs at $\phi = 6.283$, where the ratio reaches 4, indicating that the mitigation process significantly reduces the error introduced by certification. Moreover, at $\phi=5.386$ there is no point in the plot, due to the fact, that the mitigated probability equals the theoretical one precisely, as can be seen in Fig. \ref{fig:certDevs}, so the ratio would be $\infty$.

In summary, the analysis reveals that while the certification process provides a reasonable approximation of the ideal probability, the mitigation technique applied on top of certification consistently enhances the results. The most substantial improvements occur in regions where the ideal probability approaches zero, where the certified process shows more significant deviations from the ideal. The ratio of certified to mitigated certified probabilities further emphasizes the effectiveness of mitigation, as the ratio is generally above one, indicating that the mitigation process outperforms certification alone. This demonstrates the overall effectiveness of the mitigation technique in refining the certification process, particularly when the uncertified results deviate most from the ideal.

The benchmarking experiments using PyQBench validate the effectiveness of the certification scheme for qubit von Neumann measurements. The empirical probability of type II error closely aligns with theoretical values, confirming the accuracy of the methodology. Error mitigation, particularly the Mthree technique, improves result reliability by reducing noise-induced deviations. The mean absolute error is consistently below 0.01, reinforcing the robustness of the framework. Thus, PyQBench offers a scalable and statistically rigorous approach for benchmarking quantum measurement accuracy on IBM Q devices.

These results provide insights into the performance of quantum hardware and reinforce the importance of certification-based benchmarking techniques in quantum computing research. Future work should explore the application of PyQBench to alternative quantum computing platforms, as well as the integration of additional error mitigation strategies to further enhance benchmarking accuracy.

\section{Conclusions} \label{sec:conclusion}
% In this work, we have extended a Python library PyQBench, an
% open-source framework for benchmarking gate-inspired quantum computers.  For now, PyQBench can benchmark NISQ devices by verifying their computational accuracy for the problem of discrimination and certification of qubit von Neumann measurements. PyQBench offers a simplified, ready-to-use, command line interface (CLI) for running benchmarks using a predefined parameterized Fourier family of qubit measurements. For more advanced scenarios, PyQBench offers a way to employ user-defined measurements instead of predefined ones.
% The updated version of PyQBench is dedicated to IBM Q devices available through the Qiskit library. 
% Finally, we provide the source code of PyQBench on GitHub \footnote{\url{git@github.com:MartinBeseda/PyQBench.git}}  under the Apache License v2, which will allow users to utilize and extend our package in their specific applications.

% In this work, we introduce an updated version of PyQBench, an open-source Python library designed for benchmarking gate-based quantum computers. The current version of PyQBench supports the evaluation of NISQ devices by measuring their computational accuracy in tasks related to the discrimination and certification of qubit von Neumann measurements. The library offers a simplified, ready-to-use command-line interface (CLI), allowing users to run benchmarks with a predefined, parameterized Fourier family of qubit measurements. Additionally, PyQBench supports more advanced use cases by enabling users to define and employ their own custom measurement sets.

In this paper, we introduced an updated version of PyQBench, an open-source Python library designed to benchmark gate-based quantum computers. This version extends the original PyQBench framework by incorporating a certification scheme for qubit von Neumann measurements, thereby providing a more comprehensive method for evaluating Noisy Intermediate-Scale Quantum (NISQ) devices. The new functionalities include tools for certifying the accuracy of quantum measurements, in addition to discriminating them, and have been specifically designed to integrate seamlessly with IBM Q devices via the Qiskit library.

The enhanced PyQBench tool fills a crucial gap in the quantum computing landscape, where the need for reliable benchmarking methodologies is becoming increasingly important as NISQ devices mature. By supporting customizable measurement schemes and error models, PyQBench enables a more detailed and flexible analysis of quantum hardware, offering valuable insights into device performance under real-world conditions. Moreover, its user-friendly command-line interface (CLI) and Python library integration make it accessible to both novice users and advanced researchers, encouraging widespread adoption and further development.

Beyond its technical contributions, PyQBench also plays a significant role in advancing the field of quantum hardware validation. The integration of the qubit von Neumann measurement certification scheme allows researchers to more accurately assess the fidelity of quantum measurements, which is critical for the development of error mitigation techniques and quantum error correction in NISQ systems. This work aligns with broader efforts in quantum information science to establish reliable performance metrics for quantum devices, helping bridge the gap between theoretical advancements and practical implementations.

Future work on PyQBench could focus on expanding its applicability to a wider range of quantum hardware platforms beyond IBM Q, such as Google's Sycamore or Honeywell's trapped-ion systems. Additionally, integrating more sophisticated error mitigation techniques and benchmarking metrics, such as quantum volume or cross-entropy benchmarking, would further enhance its utility. These directions not only hold promise for improving the benchmarking of quantum hardware but also contribute to the overall goal of making quantum computing more robust and scalable in real-world applications.

In conclusion, the new version of PyQBench provides a powerful, flexible, and open-source tool for the quantum computing community. Its contributions to quantum hardware benchmarking, particularly through the certification of quantum measurements, will be valuable as researchers continue to push the boundaries of what NISQ devices can achieve. By making this tool freely available and encouraging community contributions, we hope to foster collaboration and innovation in the development of benchmarking tools for near-term quantum technologies.

This enhanced version of PyQBench has been specifically adapted to benchmark IBM Q quantum devices via the Qiskit library. The source code is freely available on GitHub under the Apache License v2, encouraging users to not only apply the tool to their specific research needs but also contribute to and extend its functionality for broader applications.

\section*{Acknowledgements}

It is a pleasure to thank \L{}ukasz Pawela for numerous discussions concerning experiments on IBM quantum devices. 

PL is supported by the Ministry of Education, Youth and Sports of the Czech Republic through the e-INFRA CZ (ID:90254),
with the financial support of the European Union under the REFRESH - Research Excellence For REgion Sustainability and High-tech Industries project number CZ.10.03.01/00/22\_003/0000048 via the Operational Programme Just Transition.
MB is supported by Italian Government (Ministero dell'Universit\`a e della Ricerca, PRIN 2022 PNRR) -
cod. P2022SELA7: ``RECHARGE: monitoRing, tEsting, and
CHaracterization of performAnce Regressions'' - D.D. n. 1205
del 28/7/2023.

\section*{References}
\bibliographystyle{unsrt}
\bibliography{benchmarking}

\appendix

\section{Optimal strategy for parametrized Fourier family of measurements}\label{app:fourier}

 In this Appendix, we construct the optimal theoretical strategy of certification for parametrized Fourier family of qubit measurements. 
 
 The first component is to determine the optimal initial state $\ket{\psi_0}$.  From \cite{lewandowska2021certification} we know that $\ket{\psi_0} $ maximizes the diamond norm between $\PP_U$ and $\PP_\Id$, that is, $\|  \PP_U - \PP_\Id \|_\diamond \coloneqq \max_{\| \ket{\psi} \|_1 = 1 } \|((\PP_U - \PP_\Id) \otimes \Id) (\proj{\psi})\|_1$. Then, from \cite{jalowiecki2023pyqbench} (see Proposition 1 in Appendix D), we immediately see that $\ket{\psi_0}$ is the Bell state.    
 The next step is to determine the final optimal measurement $\Omega_0$. 
 For this purpose, we present the following proposition.

 \begin{proposition}\label{bench-cert-strategy-measurement}
 	For two-point certification of von Neumann measurements $\PP_\Id$ and $\PP_{U}$  defined in Eq.~\eqref{fourier-cert-bench} with
 	statistical significance $\delta$, the
 	controlled unitaries $V_0$ and $V_1$ have the form 
 	\begin{enumerate}
 		\item if $\sqrt{1+\cos\phi} \ge \sqrt{2\delta} $ and $\phi \in [0, \pi)$, then  \begin{equation}
 		V_0 = \left(\begin{array}{cc}\sqrt{1-\delta}&\sqrt{\delta}\\-\sqrt{\delta}&\sqrt{1-\delta}\end{array}\right),
 		\end{equation} 
 		and 
 		\begin{equation}
 		V_1 = \left(\begin{array}{cc}\sqrt{\delta}&\sqrt{1-\delta}\\\sqrt{1-\delta}&-\sqrt{\delta}\end{array}\right);
 		\end{equation} 
 		
 		\item if $ \sqrt{1+\cos\phi} < \sqrt{2\delta} $ and $\phi \in [0, 2\pi)$, then  \begin{equation}
 		V_0 = \left(\begin{array}{cc}\sin \frac{\phi}{2}& | \cos \frac{\phi}{2}|\\-\cos \frac{\phi}{2}&\frac{\sin \phi}{2 |\cos \frac{\phi}{2}| }\end{array}\right),
 		\end{equation}
 		and 
 		\begin{equation}
 		V_1 = \left(\begin{array}{cc} | \cos \frac{\phi}{2} | &  \sin \frac{\phi}{2} \\ \frac{\sin\phi}{2|\cos \frac{\phi}{2}|} & - \cos \frac{\phi}{2} \end{array}\right); 
 		\end{equation}
 		
 		\item $\sqrt{1+\cos\phi} \ge \sqrt{2\delta}$ and $\phi \in [\pi, 2\pi)$, then  \begin{equation}
 		V_0 = \left(\begin{array}{cc}\sqrt{1-\delta}&-\sqrt{\delta}\\\sqrt{\delta}&\sqrt{1-\delta}\end{array}\right),
 		\end{equation}
 		and 
 		\begin{equation}
 		V_1= \left(\begin{array}{cc} - \sqrt{\delta}&\sqrt{1-\delta}\\\sqrt{1-\delta}&\sqrt{\delta}\end{array}\right). 
 		\end{equation}
 	\end{enumerate} 
 \end{proposition}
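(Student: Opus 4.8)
The plan is to specialise the general optimal certification strategy of \cite{lewandowska2021certification} to the Fourier family $U_\phi$ of \eqref{fourier-cert-bench}, and to verify optimality of the listed matrices by checking that they meet the type-I constraint $p_{\text{I}}\le\delta$ while attaining the minimal type-II value \eqref{cert-max} (optimal $V_0,V_1$ need not be unique, so this suffices). The input state is already settled: by \cite{jalowiecki2023pyqbench} the discriminator of $\PP_{U_\phi}$ and $\PP_\Id$ is the Bell state $\ket{\psi_0}=\tfrac1{\sqrt2}(\ket{00}+\ket{11})$. Feeding this state in, the first step is to record the states in which the ancilla is left after the measurement on the target qubit: under $H_0$, outcome $i$ leaves it in $\ket i$, while under $H_1$, outcome $i$ leaves it, up to a global phase, in the entrywise complex conjugate $\overline{\ket{u_i}}$ of the $i$-th column of $U_\phi$. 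The half-angle identities $\tfrac{1+e^{i\phi}}2=e^{i\phi/2}\cos\tfrac\phi2$ and $\tfrac{1-e^{i\phi}}2=-ie^{i\phi/2}\sin\tfrac\phi2$ then give $|\braket{i}{\overline{u_i}}|=|(U_\phi)_{ii}|=\tfrac{|1+e^{i\phi}|}2=|\cos\tfrac\phi2|$ for both $i$, so the two branches are geometrically identical and the type-I budget splits evenly, $\delta$ to each branch (this symmetric split is part of the general result of \cite{lewandowska2021certification}).

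On one branch the task is a two-dimensional Neyman–Pearson test: choose an orthonormal basis $\{\ket{v_i^{(0)}},\ket{v_i^{(1)}}\}$ minimising the type-II weight $|\braket{v_i^{(0)}}{\overline{u_i}}|^2$ under the constraint $|\braket{v_i^{(1)}}{i}|^2\le\delta$. This is precisely the quantity packaged by the $\sqrt{1-\delta}$-numerical range: with $q=\sqrt{1-\delta}$ the branch minimum is controlled by $\nu_q(U_\phi E^\star)$ for the diagonal unitary $E^\star$ maximising \eqref{cert-max}, and the pair of unit vectors realising $\nu_q$ is what the recipe of \cite{lewandowska2021certification} converts into the columns of $V_0$ and $V_1$. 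Carrying out the plane geometry with overlap $c=|\cos\tfrac\phi2|$ shows the minimal type-II weight is $(\sqrt{1-\delta}\,c-\sqrt\delta\sqrt{1-c^2})^2$ whenever $\sqrt{1-\delta}\,c\ge\sqrt\delta\sqrt{1-c^2}$ and $0$ otherwise; since $\sqrt{1-c^2}=\sqrt{1-\tfrac{|1+e^{i\phi}|^2}4}$, summing the two equal branch weights reproduces the stated formula for $p_{\text{II}}$, and it identifies $\ket{v_i^{(0)}}$ as the unit vector in $\mathrm{span}\{\ket i,\overline{\ket{u_i}}\}$ at overlap $\sqrt{1-\delta}$ with $\ket i$ and pushed maximally away from $\overline{\ket{u_i}}$, with $\ket{v_i^{(1)}}$ its orthocomplement in that plane.

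The three cases then arise from two elementary facts. First, the positivity threshold $\sqrt{1-\delta}\,c\ge\sqrt\delta\sqrt{1-c^2}$ rearranges to $c\ge\sqrt\delta$, i.e.\ $\sqrt{1+\cos\phi}\ge\sqrt{2\delta}$; when it fails the optimal branch test is \emph{forced} — $\ket{v_i^{(0)}}$ must be orthogonal to $\overline{\ket{u_i}}$ and $\ket{v_i^{(1)}}$ proportional to it — and translating this to matrices produces the ``rotated'' $V_0,V_1$ of case~(ii), for which one checks directly $p_{\text{I}}=\cos^2\tfrac\phi2<\delta$. Second, in the regime $p_{\text{II}}>0$ the only remaining freedom is the sign of $\sin\tfrac\phi2$, which flips the off-diagonal entries of the optimal basis; this is why $\phi\in[0,\pi)$ and $\phi\in[\pi,2\pi)$ lead to the two mirror-image matrices of cases~(i) and~(iii). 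Substituting the explicit $V_0,V_1$ back into the expressions for $p_{\text{I}}$ and $p_{\text{II}}$ then confirms $p_{\text{I}}=\delta$ (respectively $<\delta$ in case~(ii)) and that $p_{\text{II}}$ equals \eqref{cert-max}, completing the argument.

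The main obstacle is the phase bookkeeping rather than anything conceptual. The conditional states $\overline{\ket{u_i}}$ are genuinely complex, so to arrive at the real matrices in the statement one must absorb the column and row phases through the diagonal-unitary degree of freedom $E$ in \eqref{cert-max} — which effectively lets one treat $U_\phi$, as far as the measurement structure is concerned, as the real rotation through $\phi/2$ — or else carry the phases along explicitly and verify that the resulting $V_0,V_1$ do not depend on them. The other fiddly point is lining up the branch cuts of the half-angle substitution (the $|\cos\tfrac\phi2|$ and $\operatorname{sgn}\cos\tfrac\phi2$ appearing in case~(ii)) with the interval conditions $\phi\in[0,\pi)$, $[\pi,2\pi)$, $[0,2\pi)$ so that the piecewise statement comes out verbatim.
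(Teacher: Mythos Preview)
Your proposal is correct and follows essentially the same route as the paper: reduce the measurement-certification problem to branch-wise pure-state certification by feeding in the Bell state, compute the conditional ancilla states under $H_0$ and $H_1$ for each outcome $i$, observe that both branches have the same overlap $|\cos\tfrac\phi2|$, and then invoke the pure-state result (Theorem~1/Corollary~1 of \cite{lewandowska2021certification}) to read off the optimal $\ket{\omega}$ in each of the three regimes. The paper phrases the reduction slightly differently---it explicitly rewrites each conditional $H_1$ state in the form $\alpha\ket\phi+\beta\ket{\phi^\perp}$ with $\alpha,\beta\ge0$ before quoting Corollary~1, rather than your ``push $\ket{v_i^{(0)}}$ away from $\overline{\ket{u_i}}$ at fixed overlap with $\ket i$'' picture---but the content is the same, and your discussion of the phase bookkeeping via the diagonal freedom $E$ is exactly what the paper's rewriting step accomplishes.
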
 
 
 \begin{proof} To determine the controlled unitaries $V_0$ and $V_1$, we first use 
Algorithm 1 from \cite{lewandowska2021certification}. From the  Proposition 1 in \cite{jalowiecki2023pyqbench}, we assume $\ket{\psi_0} = \frac{1}{\sqrt{2}} \left( \ket{00} + \ket{11} \right)$. Then, we perform the measurement $\PP $ being $\PP_U$ or $\PP_\Id$ on the first subsystem, so we have the state conditioned by label $i$ given by $\proj{\psi_i} \propto (\bra{i} \otimes \Id) (\PP \otimes \Id) \proj{\psi_0} (\ket{i} \otimes \Id) $, where $i=0,1$. Then, we  assume two cases. 

 	For the case $i=0$, the certification scheme of von Neumann measurements is reduced to the certification scheme of the following quantum states:
 	\begin{eqnarray}
 	&H_0: \  \ket{0} \\
 	&H_1: \   H 
 	\left(\begin{array}{cc}1&0\\0&e^{i \phi}\end{array}\right)  H^\dagger \ket{0 } = \frac{1 + e^{-i \phi}}{2} \ket{0} + \frac{ 1 -  e^{-i \phi}}{2}\ket{1}
 	\end{eqnarray}
 	Next, we will use the the proof of Theorem 1 from \cite{lewandowska2021certification}. 
 	According to the assumption $\ket{\psi} = \alpha \ket{\phi} + \beta\ket{\phi^\perp}$, for some $\alpha,\beta
 	\geq0$ satisfying $\alpha^2+\beta^2=1$, we need to rewrite the state  $\frac{1}{2}\left(1+ e^{-i \phi}\right) \ket{0} + \frac{1}{2}\left(1- e^{-i \phi}\right) \ket{1} $
 	to the form 
 	$ \frac{\sqrt{1+\cos\phi}}{\sqrt{2}}  \ket{0} + \frac{i \, \sin\phi} {\sqrt{2+2\cos\phi}}  \ket{1}$.
 	Hence, we obtain the following hypothesis
 	\begin{eqnarray}
 	&H_0: \  \ket{0} \\
 	&H_1: \frac{\sqrt{1+\cos\phi}}{\sqrt{2}}  \ket{0} + \frac{i \, \sin\phi} {\sqrt{2+2\cos\phi}}  \ket{1}.
 	\end{eqnarray}
 	To achieve the optimal final measurement for such defined certification scheme, we use Corollary 1 from \cite{lewandowska2021certification}.
    The inner product between states $\ket{0} $ and $\frac{\sqrt{1+\cos\phi}}{\sqrt{2}}  \ket{0} + \frac{i \, \sin\phi} {\sqrt{2+2\cos\phi}}  \ket{1}$ equals $\frac{\sqrt{1 + \cos \phi}}{\sqrt{2}}$ and then we have:
 	
 	%	optimal measurement $\Omega_0 = \proj{\omega}$  is given by 
 	
 	\begin{enumerate}
 		\item if $\sqrt{1+\cos\phi} \ge \sqrt{2\delta} $ and $\phi \in [0, \pi)$, then the optimal
 		measurement is given by   $\Omega_0= \ketbra{\omega}{\omega}$, where 
 		\begin{equation}
 		\ket{\omega} = \sqrt{1-\delta} \ket{0} - \sqrt{\delta} \ket{1},
 		\end{equation}
 		\item if $ \sqrt{1+\cos\phi} < \sqrt{2\delta} $ and $\phi \in [0, 2\pi)$, then the optimal
 		measurement is given by   $\Omega_0= \ketbra{\omega}{\omega}$, where 
 		\begin{equation}
 		\ket{\omega} = \sin \frac{\phi}{2} \ket{0} - \cos \frac{\phi}{2} \ket{1},
 		\end{equation}
 		\item $\sqrt{1+\cos\phi} \ge \sqrt{2\delta}$ and $\phi \in [\pi, 2\pi)$, then the optimal
 		measurement is given by   $\Omega_0= \ketbra{\omega}{\omega}$, where 
 		\begin{equation}
 		\ket{\omega} = \sqrt{1-\delta} \ket{0} +\sqrt{ \delta} \ket{1},
 		\end{equation}
 	\end{enumerate} 
 	which complete first part of the proof.

 	For the case $i=1$, whereas, the certification scheme of von Neumann measurements is reduced to the certification scheme of quantum states:
 	\begin{eqnarray}
 	&H_0: \  \ket{1} \\
 	&H_1: \   H 
 	\left(\begin{array}{cc}1&0\\0&e^{i \phi}\end{array}\right)  H^\dagger \ket{1 } =   \frac{1- e^{-i \phi}}{2} \ket{0} + \frac{ 1 +  e^{-i \phi}}{2}\ket{1}
 	\end{eqnarray} 
 	In the same way, we rewrite the state $\frac{1}{2}\left(1- e^{-i \phi}\right) \ket{0} + \frac{1}{2}\left(1 + e^{-i \phi}\right) \ket{1} $
 	to the form  $\frac{\sin\phi}{\sqrt{2+2\cos\phi}}\ket{0} + \frac{\sqrt{1+\cos\phi}}{\sqrt{2}}\ket{1}$. 
 	Then,  we have the following hypothesis: 
 	\begin{eqnarray}
 	&H_0: \  \ket{1} \\
 	&H_1:  \frac{\sin\phi}{\sqrt{2+2\cos\phi}}\ket{0} + \frac{\sqrt{1+\cos\phi}}{\sqrt{2}}\ket{1}
 	\end{eqnarray}
 	
 	Again, to achieve the optimal final measurement for such defined certification scheme, we use Corollary 1,  and then we assume three cases:  
 	\begin{enumerate}
 		\item if $\sqrt{1+\cos\phi} \ge \sqrt{2\delta} $ and $\phi \in [0, \phi)$, then the optimal
 		measurement is given by   $\Omega_0= \ketbra{\omega}{\omega}$, where 
 		\begin{equation}
 		\ket{\omega} = \sqrt{\delta} \ket{0} + \sqrt{1- \delta} \ket{1}.
 		\end{equation}
 		\item if $ \sqrt{1+\cos\phi} < \sqrt{2\delta} $ and $\phi \in [0, 2\pi)$, then the optimal
 		measurement is given by   $\Omega_0= \ketbra{\omega}{\omega}$, where 
 		\begin{equation}
 		\ket{\omega} = \left| \cos \frac{\phi}{2} \right|  \ket{0} + \frac{\sin\phi}{2 \left|\cos \frac{\phi}{2} \right| }\ket{1}.
 		\end{equation}
 		\item $\sqrt{1+\cos\phi} \ge \sqrt{2\delta}$ and $\phi \in [\pi, 2\pi)$, then the optimal
 		measurement is given by   $\Omega_0= \ketbra{\omega}{\omega}$, where 
 		\begin{equation}
 		\ket{\omega} = -\sqrt{\delta} \ket{0} +\sqrt{ 1-\delta} \ket{1}.
 		\end{equation}
 		
 	\end{enumerate}
 	which complete the second part of the proof. 
 \end{proof} 

Finally, we need to calculate the minimized probability of type II error. 
Observe that for certification scheme of the quantum states:
 	\begin{eqnarray}
 	&H_0: \  \ket{i} \\
 	&H_1: \   H 
 	\left(\begin{array}{cc}1&0\\0&e^{i \phi}\end{array}\right)  H^\dagger \ket{i} = \frac{1 + e^{-i \phi}}{2} \ket{i} + \frac{ 1 -  e^{-i \phi}}{2}\ket{i^\perp}
 	\end{eqnarray}
the inner product equals $\frac{1+e^{-i \phi}}{2}$. It implies, from Theorem 1 \cite{lewandowska2021certification}, that the minimized probability of the type II error yields
	\begin{equation}
	p_{\text{II}}(\phi) =\left\{
	\begin{array}{ccc}
	\left( \frac{ |1+e^{i  \phi} | }{2} \cdot \sqrt{1-\delta} - \sqrt{1-\frac{|1+e^{i  \phi}|^2}{4}}  \cdot \sqrt{\delta } \right)^2 &\mbox{\text{for}}&\frac{ |1+e^{i  \phi} | }{2} > \sqrt{\delta},\\
	0&\mbox{\text{for}}&\frac{ |1+e^{i  \phi} | }{2} \le \sqrt{\delta}.
	\end{array}
	\right.
	\end{equation}
%\end{proposition}

\section{YML output file of synchronous experiment} \label{app:yml}
\begin{lstlisting}[language=Python, caption=YML output file of synchronous experiment]
metadata:
  experiments:
    type: certification-fourier
    qubits:
    - {target: 0, ancilla: 1}
    angles: {start: 0.0, stop: 6.283185307179586, num_steps: 8}
    delta: 0.05
    gateset: ibmq
    method: direct_sum
    num_shots: 10000
  backend_description:
    name: ibm_kyiv
    asynchronous: false
    provider: {group: open, hub: ibm-q, project: main}
data:
- target: 0
  ancilla: 1
  phi: 0.0
  delta: 0.05
  results_per_circuit:
  - name: u
    histogram: {'00': 4787, '01': 4663, '11': 314, '10': 236}
    mitigation_info:
      target: {prob_meas0_prep1: 0.00539999999999996, 
            prob_meas1_prep0: 0.0018}
      ancilla: {prob_meas0_prep1: 0.0048000000000000265, 
            prob_meas1_prep0: 0.0018}
    mitigated_histogram: {'10': 0.02272105079666005, 
            '11': 0.0308324019917434, '01': 0.4686639719002791,
            '00': 0.47778257531131735}
- target: 0
  ancilla: 1
  phi: 0.8975979010256552
  delta: 0.05
  results_per_circuit:
  - name: u
    histogram: {'01': 3158, '11': 1841, '10': 2121, '00': 2880}
    mitigation_info:
      target: {prob_meas0_prep1: 0.00539999999999996,  
            prob_meas1_prep0: 0.0018}
      ancilla: {prob_meas0_prep1: 0.0048000000000000265, 
            prob_meas1_prep0: 0.0018}
    mitigated_histogram: {'11': 0.18503494944149518, 
            '10': 0.2119853847642628, '00': 0.2863022864686138,
            '01': 0.3166773793256281}
- target: 0
  ancilla: 1
  phi: 1.7951958020513104
  delta: 0.05
  results_per_circuit:
  - name: u
    histogram: {'11': 3946, '10': 4107, '01': 933, '00': 1014}
    mitigation_info:
      target: {prob_meas0_prep1: 0.00539999999999996, 
            prob_meas1_prep0: 0.0018}
      ancilla: {prob_meas0_prep1: 0.0048000000000000265, 
            prob_meas1_prep0: 0.0018}
    mitigated_histogram: {'01': 0.09187983560151308, 
            '00': 0.0992818314007015, '11': 0.3977454665741516,
            '10': 0.4110928664236337}
- target: 0
  ancilla: 1
  phi: 2.6927937030769655
  delta: 0.05
  results_per_circuit:
  - name: u
    histogram: {'10': 4905, '11': 4844, '00': 121, '01': 130}
    mitigation_info:
      target: {prob_meas0_prep1: 0.00539999999999996, 
            prob_meas1_prep0: 0.0018}
      ancilla: {prob_meas0_prep1: 0.0048000000000000265, 
            prob_meas1_prep0: 0.0018}
    mitigated_histogram: {'00': 0.00971145661283892, 
            '01': 0.0107234135301044, '11': 0.4884707846971315,
            '10': 0.491094345159925}
- target: 0
  ancilla: 1
  phi: 3.5903916041026207
  delta: 0.05
  results_per_circuit:
  - name: u
    histogram: {'11': 4727, '10': 5047, '01': 119, '00': 107}
    mitigation_info:
      target: {prob_meas0_prep1: 0.00539999999999996, 
            prob_meas1_prep0: 0.0018}
      ancilla: {prob_meas0_prep1: 0.0048000000000000265, 
            prob_meas1_prep0: 0.0018}
    mitigated_histogram: {'00': 0.008243326178401936, 
            '01': 0.0096749343410262, '11': 0.4766264355219874, 
            '10': 0.5054553039585844}
- target: 0
  ancilla: 1
  phi: 4.487989505128276
  delta: 0.05
  results_per_circuit:
  - name: u
    histogram: {'10': 4067, '00': 924, '01': 1045, '11': 3964}
    mitigation_info:
      target: {prob_meas0_prep1: 0.00539999999999996, 
            prob_meas1_prep0: 0.0018}
      ancilla: {prob_meas0_prep1: 0.0048000000000000265, 
            prob_meas1_prep0: 0.0018}
    mitigated_histogram: {'00': 0.09020755763218545, 
            '01': 0.103168725838722, '11': 0.39955085514435557,
            '10': 0.4070728613847364}
- target: 0
  ancilla: 1
  phi: 5.385587406153931
  delta: 0.05
  results_per_circuit:
  - name: u
    histogram: {'10': 1973, '11': 1989, '00': 3069, '01': 2969}
    mitigation_info:
      target: {prob_meas0_prep1: 0.00539999999999996, 
            prob_meas1_prep0: 0.0018}
      ancilla: {prob_meas0_prep1: 0.0048000000000000265, 
            prob_meas1_prep0: 0.0018}
    mitigated_histogram: {'10': 0.1969715269854367, 
            '11': 0.2000488072203213, '01': 0.2975337874613869,
            '00': 0.305445878332855}
- target: 0
  ancilla: 1
  phi: 6.283185307179586
  delta: 0.05
  results_per_circuit:
  - name: u
    histogram: {'00': 4832, '01': 4661, '11': 257, '10': 250}
    mitigation_info:
      target: {prob_meas0_prep1: 0.00539999999999996, 
            prob_meas1_prep0: 0.0018}
      ancilla: {prob_meas0_prep1: 0.0048000000000000265, 
            prob_meas1_prep0: 0.0018}
    mitigated_histogram: {'10': 0.024153348441373016, 
            '11': 0.0250715357945843, '01': 0.4684820500233045, 
            '00': 0.48229306574073816}\end{lstlisting}
        
\begin{lstlisting}[language=Python, caption=YML output file of asynchronous experiment]
metadata:
  experiments:
    type: certification-fourier
    qubits:
    - target: 0
      ancilla: 1
    angles:
      start: 0.0
      stop: 6.283185307179586
      num_steps: 8
    delta: 0.05
    gateset: ibmq
    method: direct_sum
    num_shots: 10000
  backend_description:
    name: ibm_kyiv
    asynchronous: true
    provider:
      group: open
      hub: ibm-q
      project: main
data:
- target: 0
  ancilla: 1
  phi: 0.0
  delta: 0.05
  results_per_circuit:
  - name: u
    histogram:
      '00': 4898
      '01': 4582
      '11': 311
      '10': 209
    mitigation_info:
      target:
        prob_meas0_prep1: 0.00539999999999996
        prob_meas1_prep0: 0.0018
      ancilla:
        prob_meas0_prep1: 0.0048000000000000265
        prob_meas1_prep0: 0.0018
    mitigated_histogram:
      '10': 0.01998449638687051
      '11': 0.030549024853314708
      '01': 0.4604864304246869
      '00': 0.4889800483351278
- target: 0
  ancilla: 1
  phi: 0.8975979010256552
  delta: 0.05
  results_per_circuit:
  - name: u
    histogram:
      '11': 1952
      '00': 2876
      '10': 2074
      '01': 3098
    mitigation_info:
      target:
        prob_meas0_prep1: 0.00539999999999996
        prob_meas1_prep0: 0.0018
      ancilla:
        prob_meas0_prep1: 0.0048000000000000265
        prob_meas1_prep0: 0.0018
    mitigated_histogram:
      '11': 0.19626874194826874
      '10': 0.2071941128936882
      '00': 0.28595657203781866
      '01': 0.3105805731202244
- target: 0
  ancilla: 1
  phi: 1.7951958020513104
  delta: 0.05
  results_per_circuit:
  - name: u
    histogram:
      '01': 905
      '11': 4022
      '10': 4085
      '00': 988
    mitigation_info:
      target:
        prob_meas0_prep1: 0.00539999999999996
        prob_meas1_prep0: 0.0018
      ancilla:
        prob_meas0_prep1: 0.0048000000000000265
        prob_meas1_prep0: 0.0018
    mitigated_histogram:
      '01': 0.08902728596975329
      '00': 0.09669850424566845
      '11': 0.4054328268424948
      '10': 0.40884138294208333
- target: 0
  ancilla: 1
  phi: 2.6927937030769655
  delta: 0.05
  results_per_circuit:
  - name: u
    histogram:
      '10': 4990
      '11': 4825
      '00': 81
      '01': 104
    mitigation_info:
      target:
        prob_meas0_prep1: 0.00539999999999996
        prob_meas1_prep0: 0.0018
      ancilla:
        prob_meas0_prep1: 0.0048000000000000265
        prob_meas1_prep0: 0.0018
    mitigated_histogram:
      '00': 0.005669915398114243
      '01': 0.008121105338749025
      '11': 0.4865404579166901
      '10': 0.4996685213464466
- target: 0
  ancilla: 1
  phi: 3.5903916041026207
  delta: 0.05
  results_per_circuit:
  - name: u
    histogram:
      '11': 4826
      '10': 4964
      '00': 103
      '01': 107
    mitigation_info:
      target:
        prob_meas0_prep1: 0.00539999999999996
        prob_meas1_prep0: 0.0018
      ancilla:
        prob_meas0_prep1: 0.0048000000000000265
        prob_meas1_prep0: 0.0018
    mitigated_histogram:
      '00': 0.007888851269304064
      '01': 0.008418779091074404
      '11': 0.4866456850507466
      '10': 0.49704668458887485
- target: 0
  ancilla: 1
  phi: 4.487989505128276
  delta: 0.05
  results_per_circuit:
  - name: u
    histogram:
      '11': 3954
      '10': 4124
      '00': 889
      '01': 1033
    mitigation_info:
      target:
        prob_meas0_prep1: 0.00539999999999996
        prob_meas1_prep0: 0.0018
      ancilla:
        prob_meas0_prep1: 0.0048000000000000265
        prob_meas1_prep0: 0.0018
    mitigated_histogram:
      '00': 0.08667377941797003
      '01': 0.10197127796072936
      '11': 0.398532348147248
      '10': 0.41282259447405256
- target: 0
  ancilla: 1
  phi: 5.385587406153931
  delta: 0.05
  results_per_circuit:
  - name: u
    histogram:
      '00': 3129
      '10': 1964
      '01': 2977
      '11': 1930
    mitigation_info:
      target:
        prob_meas0_prep1: 0.00539999999999996
        prob_meas1_prep0: 0.0018
      ancilla:
        prob_meas0_prep1: 0.0048000000000000265
        prob_meas1_prep0: 0.0018
    mitigated_histogram:
      '11': 0.19408825484296274
      '10': 0.19608690118683395
      '01': 0.2983573535373757
      '00': 0.3114674904328276
- target: 0
  ancilla: 1
  phi: 6.283185307179586
  delta: 0.05
  results_per_circuit:
  - name: u
    histogram:
      '01': 4768
      '00': 4696
      '10': 241
      '11': 295
    mitigation_info:
      target:
        prob_meas0_prep1: 0.00539999999999996
        prob_meas1_prep0: 0.0018
      ancilla:
        prob_meas0_prep1: 0.0048000000000000265
        prob_meas1_prep0: 0.0018
    mitigated_histogram:
      '10': 0.02325138419612959
      '11': 0.02889276720310536
      '00': 0.4685898728546358
      '01': 0.4792659757461291
\end{lstlisting}

\end{document}